\documentclass{article}

\usepackage{arxiv}

\usepackage[utf8]{inputenc} %
\usepackage[T1]{fontenc}    %
\usepackage{url}            %
\usepackage{nicefrac}       %
\usepackage{microtype}      %
\usepackage{lipsum}		%
\usepackage{doi}

\usepackage{cite}
\usepackage{amsmath,amsthm, amssymb,amsfonts}
\usepackage{graphicx}
\usepackage{comment}
\usepackage{subcaption}
\usepackage{mwe}
\usepackage{hyperref}
\usepackage{epstopdf}

\usepackage[utf8]{inputenc}
\usepackage{algorithm}
\usepackage{algpseudocode}

\usepackage{multirow}
\usepackage{booktabs}
\usepackage{tabularx}
\usepackage{textcomp}
\newcolumntype{x}[1]{>{\centering\arraybackslash\hspace{0pt}}p{#1}}

\newcolumntype{C}{>{\centering\arraybackslash}X}

\newfloat{lstfloat}{tb}{lop}
\floatname{lstfloat}{Listing}

\setlength\heavyrulewidth{0.25ex}

\usepackage{listings} %
\usepackage{seqsplit} %
\usepackage{xcolor} %

\newtheorem{theorem}{Theorem}

\newtheorem{corollary}[theorem]{Corollary}

\usepackage{eucal}

\let\svthefootnote\thefootnote
\newcommand\freefootnote[1]{%
  \let\thefootnote\relax%
  \footnotetext{#1}%
  \let\thefootnote\svthefootnote%
}

\makeatletter
\def\lst@lettertrue{\let\lst@ifletter\iffalse}
\makeatother

\lstdefinelanguage{json}{
    basicstyle=\normalfont\ttfamily,
    numbers=left,
    numberstyle=\scriptsize,
    stepnumber=1,
    numbersep=8pt,
    showstringspaces=false,
    breaklines=true,
    frame=lines,
    backgroundcolor=\color{background},
    stringstyle=\color{red},
    keywordstyle=\color{blue},
    commentstyle=\color{green},
}

\definecolor{green}{rgb}{0,0.6,0}
\definecolor{background}{rgb}{0.9,0.9,0.9}
\definecolor{blue}{rgb}{0.58,0,0.82}
\definecolor{red}{rgb}{0.95,0.95,0.92}

\title{Implementation Study of Cost-Effective Verification for Pietrzak's VDF in Ethereum Smart Contracts}

\author{%
  Suhyeon Lee  \textsuperscript{\textsection}   \\
  School of Cybersecurity\\
  Korea University \\ 
  \texttt{orion-alpha@korea.ac.kr}\\
  Tokamak Network \\
  \texttt{suhyeon@tokamak.network} \\
  \And
  Euisin Gee \textsuperscript{\textsection} \\
  School of Cybersecurity\\
  Korea University \\ 
  \texttt{usgee@korea.ac.kr}\\
  Onther \\
  \texttt{
  justin.g@onther.io} \\
  \And
  Junghee Lee \\
  School of Cybersecurity\\
  Korea University \\
  \texttt{j\_lee@korea.ac.kr} \\
}

\hypersetup{
pdftitle={Implementation Study of Cost-Effective Verification for Pietrzak's Verifiable Delay Function in Ethereum Smart Contracts},
pdfsubject={CoRR},
pdfauthor={Suhyeon Lee, Euisin Jee, Junghee Lee},
pdfkeywords={Ethereum, blockchain security, cryptographic protocol, decentralized application, gas optimization, verifiable delay function},
}

\begin{document}
\maketitle

\begingroup\renewcommand\thefootnote{\textsection}
\footnotetext{These authors contributed equally to this work.}
\endgroup

\begin{abstract}
    
    Verifiable Delay Function (VDF) is a cryptographic concept that ensures a minimum delay before output through sequential processing, which is resistant to parallel computing. One of the significant VDF protocols academically reviewed is the VDF protocol proposed by Pietrzak. However, for the blockchain environment, the Pietrzak VDF has drawbacks including long proof size and recursive protocol computation. In this paper, we present an implementation study of Pietrzak VDF verification on Ethereum Virtual Machine (EVM). We found that the discussion in the Pietrzak's original paper can help a clear optimization in EVM where the costs of computation are predefined as the specific amounts of gas. In our results, the cost of VDF verification can be reduced from 4M to 2M gas, and the proof length can be generated under 8 KB with the 2048-bit RSA key length, which is much smaller than the previous expectation.

\end{abstract}

\keywords{Ethereum \and blockchain security \and cryptographic protocol \and decentralized application \and gas optimization \and verifiable delay function}

\section{Introduction}
\label{section: introduction}

The Verifiable Delay Function (VDF)'s major role is to guarantee a specific delay for evaluation. It can play an important role in security applications. The most notable application of blockchain using VDF is a distributed randomness beacon. VDFs do not directly generate a random number. Rather, they help to guarantee the liveness of a random number generation scheme, Commit-Reveal, by recovering a random number when the reveal phase fails. This has led Ethereum, one of the largest cryptocurrencies, to include VDFs in their future roadmap for the RANDAO block proposer selection mechanism.

In recent years, several VDF mechanisms have been proposed \cite{wesolowski2019efficient, pietrzak2019simple, lenstra2017trustworthy, khovratovich2022minroot}; however, the concept of VDFs remains relatively new. This indicates that we need comprehensive research on VDFs across various dimensions. For example, recently found parallel computation mechanisms for VDF evaluation \cite{biryukov2024cryptanalysis, leurent2023analysis} show that we need more research on the squaring computation. On the other hand, there is a scarcity of implementation research \cite{attias2020implementation} on VDFs. In this paper, we focus on VDF implementation on EVM, one of the leading blockchain environments.

Our goal is to efficiently implement VDF verification within smart contracts, providing a foundation for EVM security applications utilizing VDFs. Specifically, we focus on the Pietrzak VDF \cite{pietrzak2019simple} for two key reasons: First, the Wesolowski VDF \cite{wesolowski2019efficient}, another major VDF, requires a hash-to-prime function, which is challenging to implement directly in smart contracts (see \ref{appendix: hash-to-prime}). Second, we hypothesize that the performance optimization technique discussed by Pietrzak \cite{pietrzak2019simple} could significantly reduce the overall cost within the EVM environment. As a result of this focus, our research demonstrates through theoretical analysis and experiments that Pietrzak VDF verification can be effectively utilized within EVM-compatible smart contracts.

The contributions of this paper are as follows:

\begin{itemize} 
    \item To the best of our knowledge, this research presents the first academic study of Pietrzak VDF implementation on smart contracts. 
    \item Through theoretical analysis, we demonstrated the existence of a unique parameter that reduces the proof length and the execution gas fee, making the implementation more cost-efficient.
    \item We demonstrated that the Pietrzak VDF could be utilized in the Ethereum environment with under 8 KB calldata size for a 2048-bit RSA key length. This calldata size represents the serialized data required for VDF verification and highlights the feasibility of implementing the Pietrzak VDF on Ethereum smart contracts, contrary to previous studies' expectations. 
\end{itemize}

The rest of the paper is organized as follows: Section \ref{section: related works} reviews related works on VDFs. Section \ref{section: target} describes the target algorithm and experiment environment. In Section \ref{section: measure}, we measure the gas costs for VDF verification on the EVM and formulate them into a function using regression. Section \ref{section: result} analyzes the optimal parameters for VDF verification costs and presents experimental results that demonstrate the theoretical analysis. Section \ref{section: discussion} addresses challenges in implementing VDFs on Ethereum and explores potential directions for research and optimization. Finally, Section \ref{section: conclusion} concludes the paper by summarizing our findings and outlining future research plans. The source code for our implementation is available in \ref{appendix: code availability}.

\section{Related Works} \label{section: related works}

Decentralized, bias-resistant randomness is a cornerstone of blockchain security, driving extensive research into Distributed Randomness Beacon (DRB) protocols. Recent Systematization of Knowledge (SoK) studies by Raikwar and Gligoroski \cite{raikwar2022sok}, Choi et al. \cite{choi2023sok}, and Kavousi et al. \cite{kavousi2024sok} provide comprehensive taxonomies of DRBs, analyzing them in terms of cryptographic primitives, interactivity, unbiasability, unpredictability, and scalability. These surveys underscore the critical role of publicly verifiable randomness in preventing adversarial bias or manipulation in decentralized applications.

One cryptographic mechanism that has gained significant traction in DRB protocols is the Verifiable Delay Function (VDF). The concept of VDFs builds on earlier advances in time-delayed cryptography. Rivest et al. introduced the “Time-Lock Puzzle” \cite{rivest1996time}, which envisioned recursive computation for delayed output but lacked efficient public verifiability. Mahmoody et al. later proposed Proof of Sequential Work (PoSW) \cite{mahmoody2013publicly}, thereby formalizing sequential tasks in cryptographic protocols. Lenstra and Wesolowski’s Sloth \cite{lenstra2017trustworthy} is often regarded as a precursor to modern VDFs, which were later formalized by Boneh et al. \cite{boneh2018verifiable} into efficient and publicly verifiable delay primitives.

Following the initial formalization of VDFs, various constructions have emerged. Wesolowski’s \cite{wesolowski2019efficient} and Pietrzak’s \cite{pietrzak2019simple} designs are particularly notable for their strong security proofs and computational efficiency. Beyond DRBs, VDFs have also found applications in non-interactive cryptographic timestamping \cite{landerreche2020non} and DoS mitigation \cite{raikwar2021non}, underscoring their versatility. Pietrzak’s VDF, in particular, has been adapted in several ways: Hoffmann and Pietrzak \cite{hoffmann2024watermarkable} extend it to watermarkable and zero-knowledge settings, Motepalli and Jacobsen \cite{motepalli2024delay} apply it to blockchain bootstrapping, and it used the Pietrzak implementation by Chia Network \cite{chiaGreenPaper}. However, these works focus primarily on conceptual or protocol-level enhancements, rather than on optimizing on-chain performance in environments like the Ethereum Virtual Machine (EVM).

In addition to these foundational VDF-based designs, other DRB solutions have been proposed that rely on various cryptographic primitives to achieve robust randomness on-chain. For example, RandRunner \cite{schindler2021randrunner} employs trapdoor VDFs to ensure strong uniqueness, Bicorn \cite{bicorn} incorporates VDFs in an optimistically efficient DRB, and RandChain \cite{han2020randchain} leverages sequential Proof-of-Work to decentralize randomness generation. Raikwar \cite{raikwar2022competitive} further examines competitive DRB protocols, highlighting different strategies for achieving resilience against adversarial manipulation. Collectively, these systems reinforce the importance of secure, verifiable primitives that balance theoretical soundness with practical feasibility in blockchain contexts.

Later proposals, such as Veedo and Minroot \cite{khovratovich2022minroot}, target specific blockchain ecosystems (e.g., Starknet or RANDAO). At the same time, recent research by Biryukov et al. \cite{biryukov2024cryptanalysis} and Leurent \cite{leurent2023analysis} indicates that distributed modular squaring might be possible, challenging the assumption of a strictly sequential computation. This highlights the need to reconcile theoretical security assumptions with real-world performance constraints in VDF-based DRBs.

Despite increasing interest in VDFs, implementation-focused investigations remain sparse. One of the few in-depth evaluations is by Attias et al. \cite{attias2020implementation}, who examined both Pietrzak’s and Wesolowski’s VDFs—covering evaluation, proof generation, and verification—and noted that large proof sizes could be problematic for blockchain environments. Choi et al. \cite{bicorn} later integrated Wesolowski’s VDF into Ethereum commit-reveal schemes but did not provide a cost-optimization analysis or a large-scale feasibility study.

To address this gap, our work examines the on-chain performance of Pietrzak’s VDF on Ethereum smart contracts. We concentrate on gas costs, proof sizes, and optimization strategies that render VDF-based solutions viable. By offering a comprehensive feasibility analysis for Pietrzak’s VDF in real-world blockchain environments, we extend existing DRB research and demonstrate that VDFs can be efficiently deployed within the EVM.

\section{Implementation Target and Environment} \label{section: target}

In this section, we provide the target algorithm and environment. Firstly, we define the Pietrzak VDF algorithm we aim to implement. Secondly, we describe the experiment environment for the cost measurement.

\subsection{Pietrzak VDF algorithms}

Our research has a model with a smart contract that supports a function that performs Pietrzak VDF verification. Referring to the Pietrzak's original paper \cite{pietrzak2019simple}, we present the evaluation procedure and the halving protocol. In the following subsections, we show how these procedures lead to proof generation and verification in a blockchain environment, culminating in a refined approach that reduces on-chain overhead in the Ethereum Virtual Machine (EVM).

\subsubsection{VDF Evaluation}

The Pietrzak VDF algorithm operates with the following parameters:
\begin{itemize}
    \item $x$: The input value for the VDF evaluation.
    \item $N$: The modulus (an RSA modulus $N = p \cdot q$), where $p$ and $q$ are large primes.
    \item $T$: The time delay parameter, where $T = 2^\tau$ for some $\tau$.
    \item $y$: The output value, computed as $y = x^{2^T} \bmod N$.
\end{itemize}

To evaluate the VDF, one exponentiates $x$ by $2^T \bmod N$, as shown in Equation~\ref{eq: evaluation}:
\begin{equation} \label{eq: evaluation}
    y \equiv x^{2^T} \equiv x^{2^{2^\tau}} \mod{N}.
\end{equation}

The sequential nature of repeated squaring ensures that this operation cannot be significantly parallelized, making it an effective time-delay mechanism. Algorithm~\ref{algorithm: VDF evaluation} details this process, serving as the foundation for proving that the output $y$ can only be computed by incurring a predefined amount of real-time work. This property is crucial for applications requiring verifiable latency and resistance to parallel computation.

\begin{algorithm}[htb]
\caption{Evaluation of the Pietrzak VDF}
\label{algorithm: VDF evaluation}
\begin{algorithmic}[1]
\State \textbf{input:} $x, T, N$
\State \textbf{output:} $y$
\State $y \gets x$
\For{$k \gets 1$ \textbf{to} $T$}
    \State $y \gets y^2 \mod N$
\EndFor
\State \textbf{return} $y$
\end{algorithmic}
\end{algorithm}

\subsubsection{Pietrzak VDF proof generation}

The proof generation defined in Algorithm \ref{algorithm: VDF proof generation} starts with the evaluation result and is used to construct the cryptographic proofs required for VDF verification. The halving protocol detailed in lines 5 -- 13 optimizes the proof generation process for VDFs by reducing the verification time incrementally. At each step, the protocol halves the time parameter $T$, recalculates intermediate values $x_{i}$ and $y_{i}$, and updates these values based on a cryptographic hash function to ensure integrity. As we implement it on Ethereum, we use \texttt{Keccak256} as the cryptographic hash function. This iterative reduction continues until the desired granularity is reached, ensuring that the verifier's workload is minimized while maintaining cryptographic security. Each halving can be tracked as intermediate values $v_i$ are stored as part of the proof.

\begin{algorithm}[htb]
\caption{Pietrzak VDF proof generation}
\label{algorithm: VDF proof generation}
\begin{algorithmic}[1]
\State \textbf{input:} $x, T, N$
\State \textbf{output:} $\{\pi_i\}_{i=1}^{\tau}$
\State $\tau \gets \lfloor \log_2(T) \rfloor$
\State $(x_1, y_1) \gets (x, y)$
\For{$i \gets 1$ \textbf{to} $\tau$}
    \State $v_i \gets x^{2^{T/2^{i-1}}}$
    \State $r_i \gets keccak256(x_i || y_i || v_i)$
    \State $x_{i+1} \gets x_i^{r_i} \cdot v_i \mod N$
    \If{$T/2^{i-1}$ \textbf{is odd}} \State $y_i \gets y_i^2 \mod N$\ \EndIf
    \State $y_{i+1} \gets v_i^{r_i} \cdot y_i \mod N$
    \State $\pi_i \gets v_i$
\EndFor
\State \textbf{return} $\{\pi_i\}_{i=1}^{\tau}$
\end{algorithmic}
\end{algorithm}

\subsubsection{Pietrzak VDF verification}
Algorithm \ref{algorithm: verfication} outlines the Pietrzak VDF verification process. Given that a prover insists the VDF evaluation $x^{2^{T}} = y$ with the time delay parameter $T=2^\tau$, the set of halving proof $\{\pi_i\}_{i=1}^{\tau}$, and the modulus $N$. It repeats the halving protocol (lines 5 -- 13) until the verifier can simply check the result by squaring the base. The halving protocol reduces the number of exponentiations required for verification by half.  The halving protocol outputs $x_1, y_1$ and the proof $\pi$ reasoning $x_1^{2^{2^{\tau-1}}}=y_1$. Therefore, the verifier essentially needs to repeat the halving protocol $\tau$ times for the minimal exponentiation.

One drawback of the Pietrzak VDF is that as the time delay parameter \(T\) increases, the proof size and the number of halving protocol repetitions for the verifier also grow. To address this, Pietrzak introduced a proof-shortening parameter, \(\delta\), which allows the verifier to skip \(\delta\) halving rounds. Specifically, the verifier can repeat the halving protocol \(\tau - \delta\) times instead of \(\tau\) times, then check \(x^{2^{2^\delta}} \stackrel{?}{=} y\). The choice of $\delta$ may be determined by the system designer, the verifier, or specified by protocol parameters, depending on the implementation. This method both reduces the proof size and alleviates verification overhead. Algorithm~\ref{algorithm: refined verfication} describes this refined verification procedure.

In most systems, the costs of each computation and the overhead due to proof size are dependent on the specific system environment, making it difficult to determine the optimal value for the proof shortening parameter, $\delta$. However, in the EVM, the costs for data (calldata) and arithmetic computations are predefined. This led us to hypothesize that it is possible to identify the optimal $\delta$ value for implementation, thereby minimizing both computation and calldata costs in blockchain environments.

\begin{figure}[htb]
    \centering
    \begin{minipage}[t]{0.48\textwidth}
        \begin{algorithm}[H]
        \caption{Pietrzak VDF Verification}
        \label{algorithm: verfication}
        \begin{algorithmic}[1]
        \State \textbf{input:} $x, y, T, \{\pi_i\}_{i=1}^{\tau}, N$
        \State \textbf{output:} True or False
        \State $\tau \gets \lfloor \log_2(T) \rfloor$
        \State $(x_1, y_1) \gets (x, y)$
        \For{$i \gets 1$ \textbf{to} $\tau$}
            \State $v_i \gets \pi_i$
            \State $r_i \gets keccak256(x_i || y_i || v_i)$
            \State $x_{i+1} \gets x_i^{r_i} \cdot v_i \mod N$
            \If{$T/2^{i-1}$ \textbf{is odd}} \State $y_i \gets y_i^2 \mod N$\ \EndIf
            \State $y_{i+1} \gets v_i^{r_i} \cdot y_i \mod N$
        \EndFor
        \If{$y_{\tau+1} = x_{\tau+1}^{2}$}
            \State \textbf{return} True
        \Else
            \State \textbf{return} False
        \EndIf
        \end{algorithmic}
        \end{algorithm}
    \end{minipage}
    \hfill
    \begin{minipage}[t]{0.48\textwidth}
        \begin{algorithm}[H]
        \caption{Refined Pietrzak VDF Verification}
        \label{algorithm: refined verfication}
        \begin{algorithmic}[1]
        \State \textbf{input:} $x, y, T, \delta, \{\pi_i\}_{i=1}^{\tau-\delta}, N$
        \State \textbf{output:} True or False
        \State $\tau \gets \lfloor \log_2(T) \rfloor$
        \State $(x_1, y_1) \gets (x, y)$
        \For{$i \gets 1$ \textbf{to} $\tau-\delta$}
            \State $v_i \gets \pi_i$
            \State $r_i \gets keccak256(x_i || y_i || v_i)$
            \State $x_{i+1} \gets x_i^{r_i} \cdot v_i \mod N$
            \If{$T/2^{i-1}$ \textbf{is odd}} \State $y_i \gets y_i^2 \mod N$\ \EndIf
            \State $y_{i+1} \gets v_i^{r_i} \cdot y_i \mod N$
        \EndFor
        \If{$y_{\tau-\delta+1} = x_{\tau-\delta+1}^{2^{2^\delta}}$}
            \State \textbf{return} True
        \Else
            \State \textbf{return} False
        \EndIf
        \end{algorithmic}
        \end{algorithm}
    \end{minipage}
    \caption{Algorithms for Pietrzak VDF Verification and Refined Verification.}
    \label{fig:combined-algorithms}
\end{figure}

\subsection{Target Bit Length and Target Delay}

The security of Pietrzak VDF configuration depends on the difficulty of the prime factoring problem, akin to that in the RSA encryption system. Currently, a 2048-bit key length is deemed sufficient for maintaining security standards. However, considering potential advancements in computational capabilities, a 3072-bit key length might be necessary after ten years. Therefore, our experiments included tests with both 2048-bit and 3072-bit settings for VDFs.

In a recent study on VDF implementation using C++ \cite{attias2020implementation}, VDF evaluation times with the time delay parameter \(T = 2^\tau\) were examined. The study found that for \(\tau\) values ranging from 20 to 25, the evaluation can be completed within one minute using a personal CPU. The evaluation involves performing \(y = x^{2^T}\). We chose to conduct our VDF experiments within the same \(\tau\) range, as it strikes a balance between computational feasibility and practical utility while avoiding excessive resource consumption.

\subsection{Environment for Gas Usage Measurement}
\label{subsection:experiment-environment}

\subsubsection{Development Environment and Tools}

The development of the smart contract was conducted using Solidity version 0.8.26 and EVM version Cancun. At the start of the development process, it was the latest compiler version fully supported by Foundry, which is a comprehensive Ethereum development toolchain that provides a robust environment for building, testing, and debugging smart contracts. Furthermore, gas optimization is best achieved with the latest version of the compiler.

\subsubsection{Optimizer Configuration}

The core parameter for the Solidity optimizer configuration is \texttt{Runs} which indicates how often each opcode will be used in the deployed contract. That is, higher \texttt{Runs} implies less gas cost for execution and higher gas cost for deployment. We set the optimizer parameter \texttt{Runs} to 4,294,967,295($2^{32} - 1$), which is the maximum value of the parameter, indicating the highest level of optimization for execution.

We set up viaIR (Intermediate Representation) and the Yul optimizer in the Solidity compiler to generate efficient bytecode. viaIR helps efficiency, security, and simplicity in code generation by deviating from the direct compilation of Solidity code to EVM bytecode. Instead, viaIR processes an intermediate step where Yul code, an intermediate representation, is generated. The Yul optimizer refines this intermediate representation, enhancing the IR produced by the pipeline, as well as inline assembly and utility Yul code generated by the compiler \cite{soliditycompilerdocs}.

The \textsf{msize()} operation is incompatible with the Yul optimizer, necessitating its replacement with the free memory pointer. This adjustment required manual updates to memory indexes, which were rigorously tested to ensure the integrity of our implementation. To comply with Solidity's memory model and enable viaIR, we incorporated memory-safe assembly blocks. These blocks ensure robust memory management and adhere to Solidity's safety constraints, allowing the viaIR pipeline to be activated, resulting in significant gas savings \cite{soliditymemorysafetydocs}.

\subsubsection{Gas Measurement in Local Testing Environment}

As the test target code, we implemented contracts for each test in one main source file ensuring that each contract contains only one external function. This approach minimizes the error margin that can arise when multiple external functions are present, as the order of the function selectors impacts the \textsf{gasUsed}.

For the cost (gas usage) measurement, we utilized the Anvil node, provided by the Foundry toolchain, to create a local testing environment. The gas usage was obtained using the Forge testing framework provided by Foundry. The Forge standard library's \textsf{Test} contract includes the \textsf{lastCallGas()} function, which retrieves the \textsf{gasUsed} in the last call. Additionally, \texttt{isolate} flag must be set when running the Forge's test command to ensure all top-level calls are executed as separate transactions in distinct EVM contexts. This setup allows for more precise gas accounting. It is important to note that the actual gas cost incurred on the real network is determined by multiplying the estimated \textsf{gasUsed} by the \textsf{gasPrice} at the Ethereum network circumstance.

\section{Cost measurement and approximation} \label{section: measure}

In this section, we introduce the data structure we used to send VDF proofs in the Ethereum transaction calldata. Then, we investigate the gas cost to send VDF proofs and regress the results to linear functions for the next sections.

\subsection{Data Structure}

In this study, the VDFs are implemented with key sizes of 2048 and 3072 bits. However, the EVM natively supports arithmetic operators only up to 256-bit numbers. To facilitate efficient computation of numbers exceeding 256 bits, it is necessary to employ customized data formats. We revised the big number library implementation for Solidity developed by Firo \cite{firoorg_solidity_BigNumber}. The original data structure in this library includes bytes data (\texttt{bytes} val), bit length (\texttt{uint} bitlen), and a sign indicator (\texttt{bool} neg). Among the three elements, we deleted the sign indicator and its related low-level logic since our implementation does not require handling negative values. This library optimizes on-chain operations such as addition and subtraction by adding leading zeros for 32-byte alignment with the EVM's memory word size, eliminating offset management. Also, it utilizes the bit length data field to optimize multiplication and comparison operations. Therefore, we format the bytes data and include the bit length alongside it, significantly reducing unnecessary on-chain computations. For example, the data format for a 2048-bit representation in JSON is as shown in Listing \ref{lst:example}. In this way, we can reduce the gas cost even though the calldata size increases. This decision is reasonable because the primary goal is to reduce the gas cost, not the transaction calldata size.

\begin{lstfloat}[htb]
\begin{lstlisting}[language=json]
{
  "big_number": {
    "val": "0x4621c26320fe0924bba1b7d5bc863495b9f0db3823b12a9a18e21d23...........b2127fd6390f5f6234a164bd39d1dd6884b768c4dd790586ee",
    "bitlen": 2047
  }
}
\end{lstlisting}
\caption{Calldata for a Big Number Example}
\label{lst:example}
\end{lstfloat}

\subsection{Gas Cost for Calldata and Dispatch}

The gas cost for Ethereum transaction calldata, as described in EIP-2028 "Transaction data gas cost reduction" \cite{eip2028}, is divided into two cases: 16 gas per non-zero byte and 4 gas per zero byte.

Considering the above gas pricing, we can calculate the cost for sending one 2048-bit length data in the aforementioned data structure. Taking into account the probability of each byte being non-zero, which is \(\frac{1}{256}\), the expected gas cost for sending one 2048-bit length data is 1036 gas. The gas for the Big Number structure is added to this.

In our implementation, the function \texttt{verifyRecursiveHalvingProof} processes the refined verification, Algorithm \ref{algorithm: refined verfication}. 
A call to this function involves encoding parameter values. The process starts by computing the \texttt{keccak256} of the UTF-8 byte representation of the string\footnote{verifyRecursiveHalvingProof((bytes,uint256)[], (bytes,uint256), (bytes,uint256), (bytes,uint256), uint256, uint256)}, then using the first four bytes, which results in \texttt{0xd8e6ac60}.

For static types such as \texttt{uint256}, fields including \texttt{delta} and \texttt{T} are padded to 32 bytes. On the other hand, In EVM, dynamic types, including \texttt{bytes}, \texttt{struct}, and dynamic arrays, are encoded by specifying an offset to the start of the data, its length, and then the data itself. This routine is applied recursively for nested dynamic types. In case of \texttt{BigNumber[] memory v}, it is the dynamic array of dynamic type struct \texttt{BigNumber} that contains dynamic bytes \texttt{val}. So the way dynamic types are encoded is recursively applied to each halving proof \texttt{v}.

Parameters such as $T$, $\delta$, $\lambda$, and parameter count (e.g., 6 in our model) determine the offsets and lengths. The gas usage for data depends probabilistically on the distribution of zero and non-zero bytes in the data. Since EVM organizes memory in 32-byte words, we can consider the same number of \texttt{word} is repeated when data is repeated. Finally, we can predict the approximate gas usage of calldata using this EVM data construction structure.

To estimate the gas usage for VDF proof calldata, we predicted the data composition and its associated costs by analyzing the probability distribution of zero and non-zero bytes in the serialized data. This prediction was derived using the EVM's gas pricing model for calldata, where zero bytes incur 4 gas units and non-zero bytes incur 16 gas units as defined in EIP-2028 \cite{eip2028}. Based on these predictions, we conducted experiments to measure the actual calldata size and the corresponding gas usage under various proof configurations.

The related gas usage comprises of the intrinsic gas and data dispatching gas. The intrinsic gas in EVM refers the minimum cost for a transaction. It contains the constant transaction cost (21000 gas) and the cost for data supply. For verification, we need not only to allocate data in a transaction, but also to dispatch data into the verification function. The dispatching cost is proportional to the data size. Therefore, we decided to include the dispatching cost into the data cost in this study.
Figure \ref{fig:calldata-2048} for $\lambda=2048$ and \ref{fig:calldata-3072} for $\lambda=3072$ show the calldata size with the number of proofs and the related gas usage. We can see the gas usage and the calldata size simply and linearly increase with the number of proofs.

\begin{figure}[!tp]
    \centering
    \begin{subfigure}[t]{0.48\linewidth}
        \includegraphics[width=\linewidth]{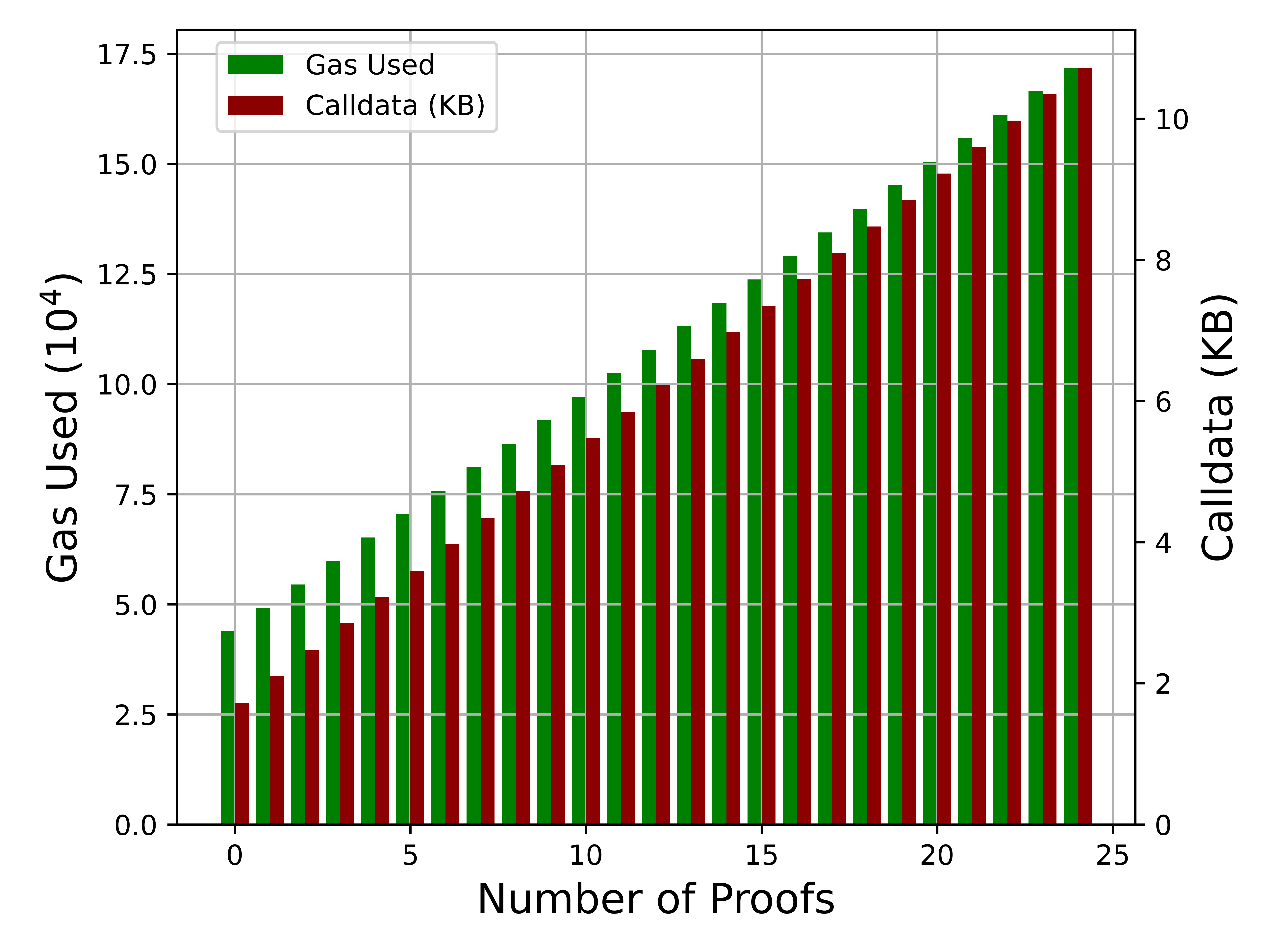}
        \caption{Calldata Cost, $\lambda=2048$}
        \label{fig:calldata-2048}
    \end{subfigure}
    \hfill
    \begin{subfigure}[t]{0.48\linewidth}
        \includegraphics[width=\linewidth]{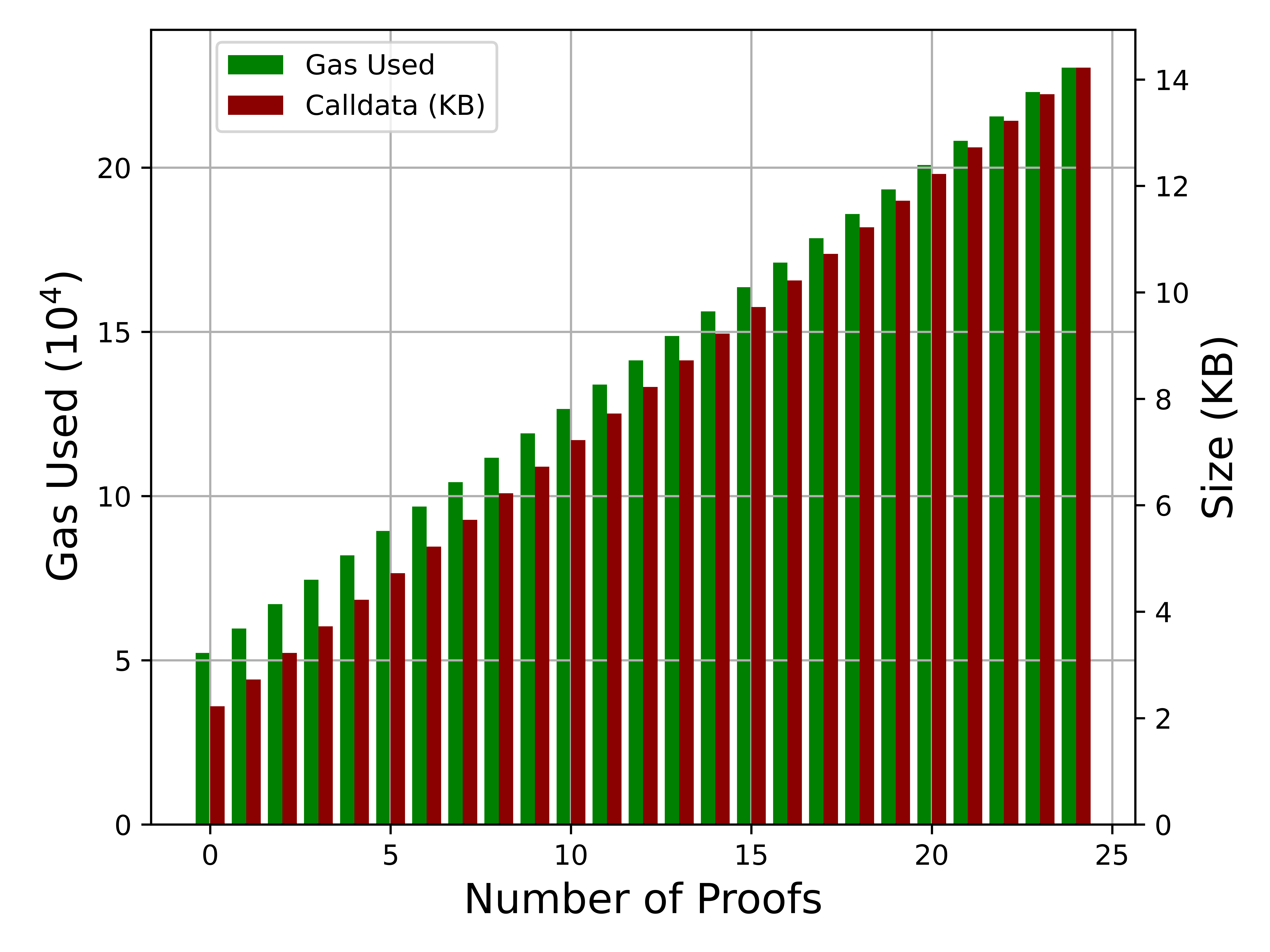}
        \caption{Calldata Cost, $\lambda=3072$}
        \label{fig:calldata-3072}
    \end{subfigure}

    \vspace{0.5cm} %
    \begin{subfigure}[t]{0.48\linewidth}
        \includegraphics[width=\linewidth]{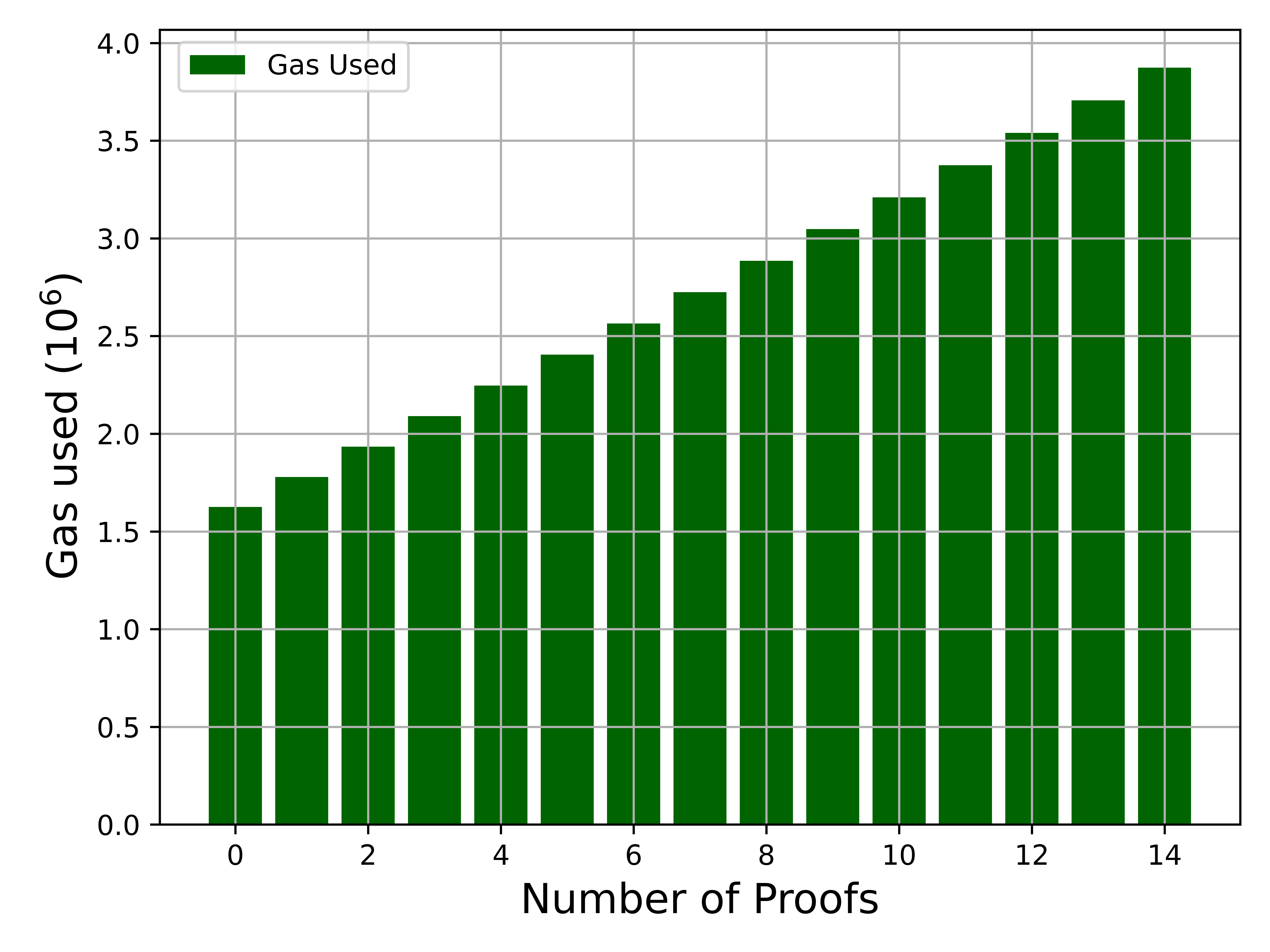}
        \caption{Halving Cost, $\lambda=2048$}
        \label{fig:halving-2048}
    \end{subfigure}
    \hfill
    \begin{subfigure}[t]{0.48\linewidth}
        \includegraphics[width=\linewidth]{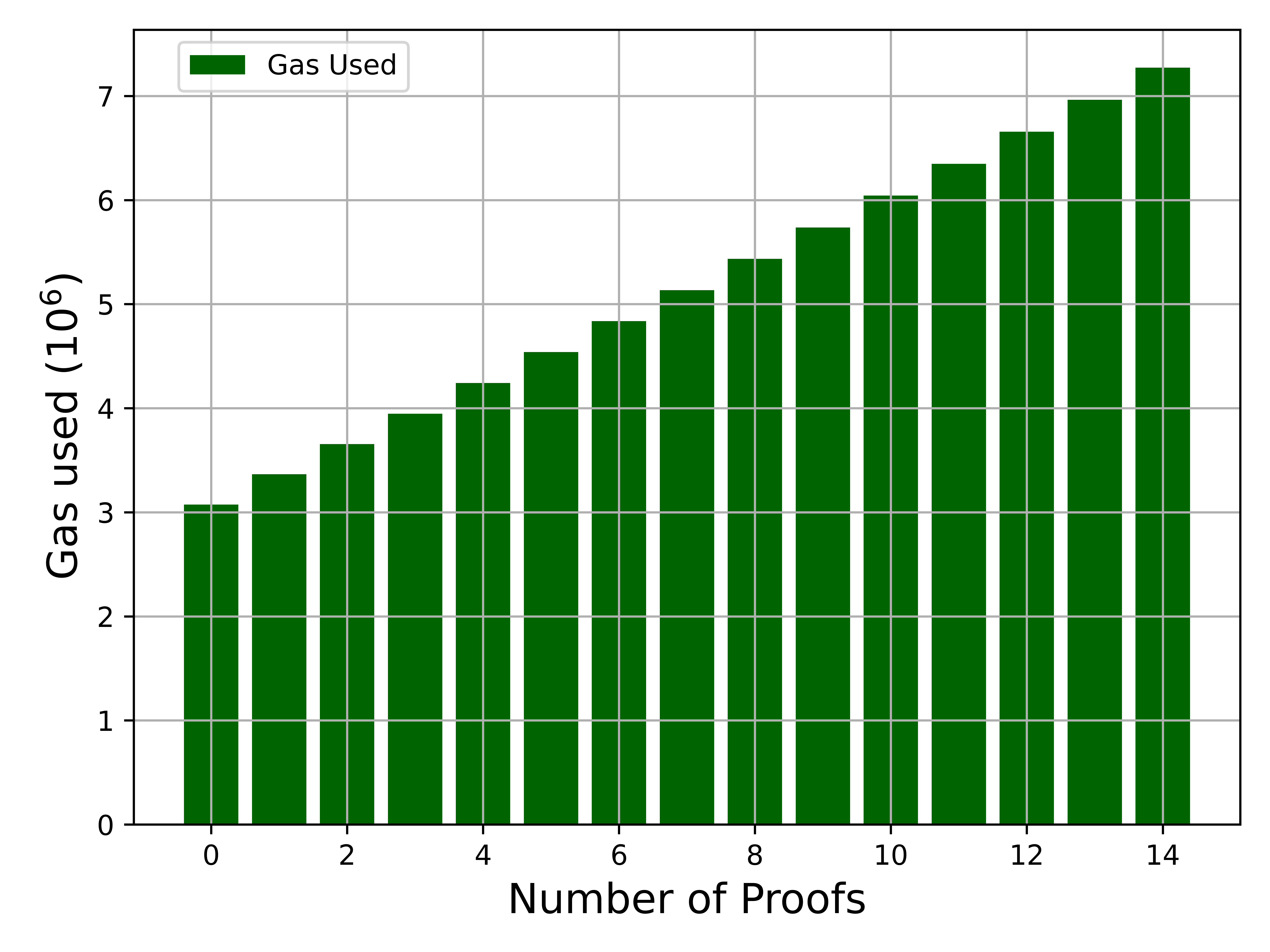}
        \caption{Halving Cost, $\lambda=3072$}
        \label{fig:halving-3072}
    \end{subfigure}

    \vspace{0.5cm} %
    \begin{subfigure}[t]{0.48\linewidth}
        \includegraphics[width=\linewidth]{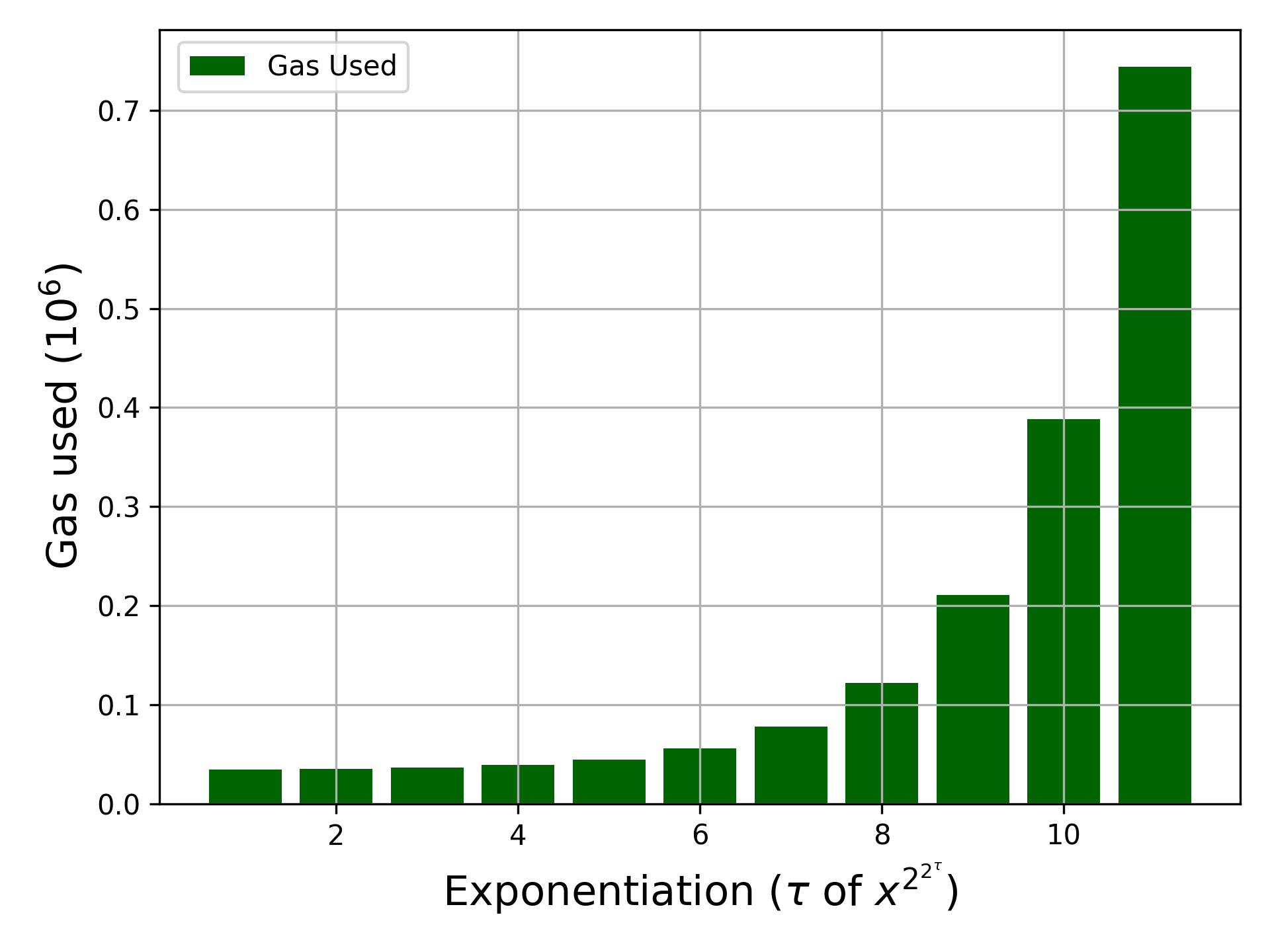}
        \caption{\texttt{ModExp} Cost, $\lambda=2048$}
        \label{fig:modexp-2048}
    \end{subfigure}
    \hfill
    \begin{subfigure}[t]{0.48\linewidth}
        \includegraphics[width=\linewidth]{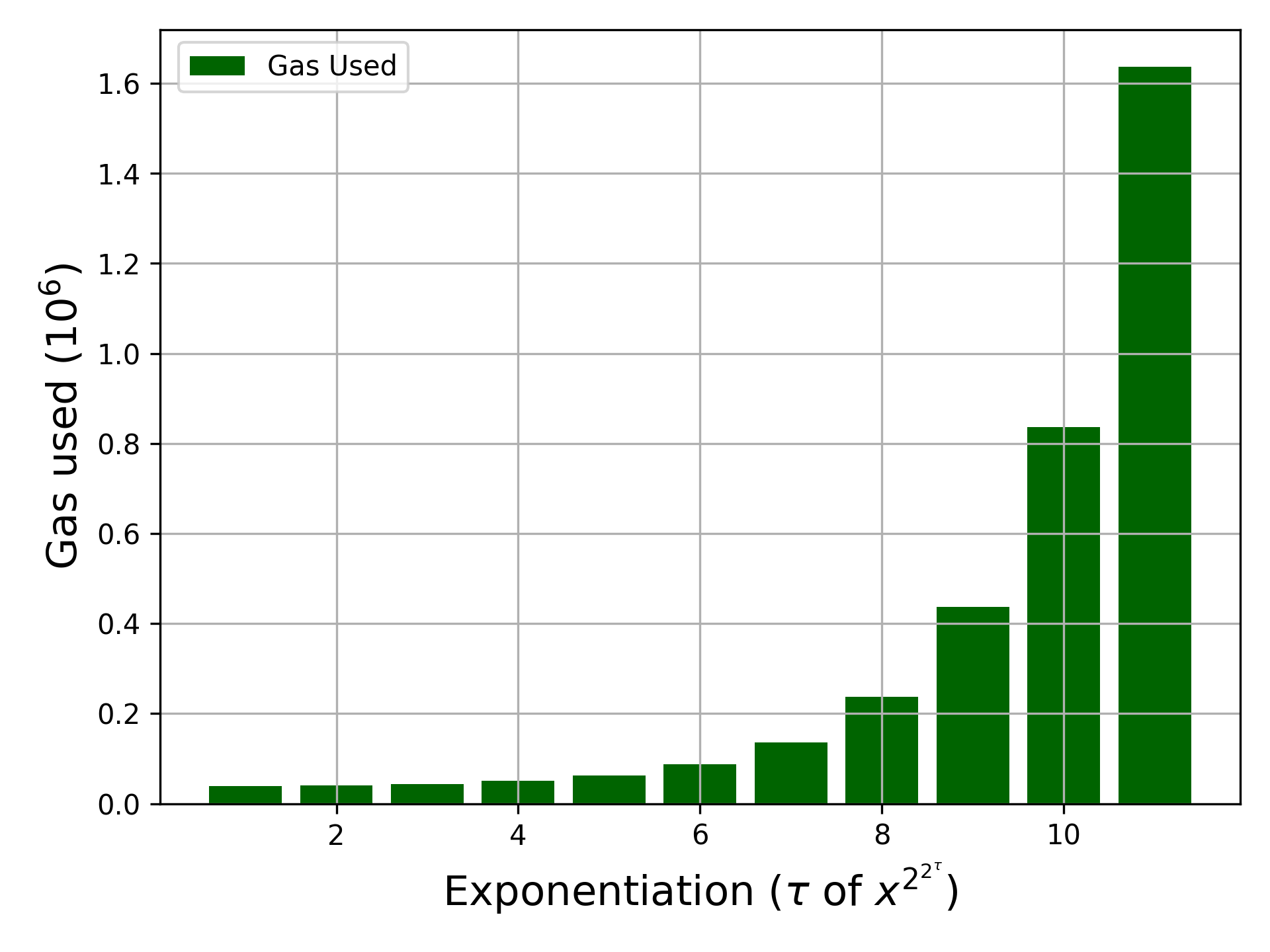}
        \caption{\texttt{ModExp} Cost, $\lambda=3072$}
        \label{fig:modexp-3072}
    \end{subfigure}

    \caption{Experiment results: calldata cost, halving cost, and \texttt{ModExp} cost for different configurations ($\lambda=2048$ and $\lambda=3072$).}
    \label{fig:combined-experiment-results}
\end{figure}

Finally, to describe the relationship between transaction size and gas costs, we performed a basic regression analysis. By applying the least squares method \cite{montgomery2021introduction}, which finds the best-fit line by minimizing the sum of the squared differences between observed and predicted values, we derive the following linear regression equations \ref{eq:data-2048} and \ref{eq:data-3072} for data sizes of 2048 and 3072 bytes, respectively:

\begin{equation} \label{eq:data-2048}
    \mathcal{G}_{data, 2048}(\tau) = 5295.73 \cdot \tau + 37610.11
\end{equation}

\begin{equation} \label{eq:data-3072}
    \mathcal{G}_{data, 3072}(\tau) = 7389.31 \cdot \tau + 43943.88
\end{equation}

These equations estimate the gas cost $\mathcal{G}$ as a function of transaction size $\tau$, highlighting the increase in required gas with larger transaction calldata.

\subsection{Modular Exponentiation Cost}

EIP-198 \cite{eip198}, known for "Big Integer Modular Exponentiation", introduced the \texttt{ModExp} (Modular Exponential) precompiled contract to the EVM, offering a streamlined and cost-efficient approach for large integer modular exponentiation. \footnote{EIP-2565 \cite{eip2565} contributed to further refining the gas cost model for these modular exponentiation operations by proposing adjustments to the pricing formula.} 
As its obvious that a precompile provides the best efficinecy, we applied \texttt{ModExp} for both the modular exponentiation and multi-exponentiation operations in the implementation in Algorithm \ref{algorithm: refined verfication}.
In Figure \ref{fig:modexp-2048} and Figure \ref{fig:modexp-3072} illustrating the experiment results, the pricing of \texttt{ModExp} is linearly proportional to the number of exponentiation as $ \mathcal{G} (T) = aT + b$, where we denote $\tau$ as the logarithmic value of $T$. We could get the following results from a simple regression process.

\begin{equation} \label{eq:exp-2048}
    \mathcal{G}_{exp, 2048} (\tau) = 346.92 \cdot 2^\tau + 33432.08
\end{equation}

\begin{equation} \label{eq:exp-3072}
    \mathcal{G}_{exp, 3072} (\tau) = 780.83 \cdot 2^\tau + 37445.60
\end{equation}

\subsection{Halving Cost}

The halving cost of the arithmetic operations in the verification indicates the gas usage for lines 5 -- 13 in Algorithm \ref{algorithm: refined verfication}. The modular exponentiation operations in lines 8, 10, and 12 use the precompiled \texttt{ModExp}. For simplicity, as we assumed $T = 2^\tau$, in this case, the if statement in line 9 is always false, therefore line 10 is never executed. Hence, the gas usage is clearly proportional to the number of the repetition time, that is, $\tau-\delta$.

The results of our experiments demonstrate the anticipated correlation. Figure \ref{fig:halving-2048} and Figure \ref{fig:halving-3072} display the gas cost of the repeated halving operations with $\lambda=2048$ and $\lambda=3072$ correspondingly. The data exhibits a clear linear relationship, allowing us to perform a regression analysis and model the results using a linear function. By employing the method of least squares approximation, we obtain the subsequent functions as Equation \ref{eq:halving-2048} and \ref{eq:halving-3072}:

\begin{equation} \label{eq:halving-2048}
    \mathcal{G}_{halving, 2048}(\tau) = 160604.41 \cdot \tau - 157178.66
\end{equation}

\begin{equation} \label{eq:halving-3072}
    \mathcal{G}_{halving, 3072}(x) = 299896.98 \cdot \tau - 248119.67
\end{equation}

\section{Cost-Effective Implementation} \label{section: result}

This section explains the implementation of a cost-efficient Pietrzak VDF verifier as an Ethereum smart contract. In the beginning, we utilize the gas cost outcomes obtained from the preceding sections to make theoretical predictions for the optimized proof generating process. Secondly, we prove that there exists a unique \(\delta\) that minimizes the gas usage for VDF verification, and this minimization is independent of \(\tau\). In the end, we compare the anticipated outcome based on theory with the actual results obtained from the experiment.

\subsection{Theoretical Analysis on Proof Generation for Gas Usage Optimization}

The optimized proof generation in this research means minimizing the verification gas cost in EVM. We can define the total gas cost function $g_{total}$ using the regression results (Eq. \ref{eq:data-2048} -- \ref{eq:halving-3072}) in the previous section.

\begin{equation} \label{eq:total gas}
    \mathcal{G}_{\text{total}}(\tau, \delta) = \mathcal{G}_{\text{calldata}} + \mathcal{G}_{\text{halving}} + \mathcal{G}_{\text{exp}} + C
\end{equation} 

\begin{equation}
    = \big( \underbrace{\alpha \cdot (\tau-\delta) + c_1}_{\mathcal{G}_{\text{calldata}}} \big) 
    + \big( \underbrace{\beta \cdot (\tau-\delta) + c_2}_{\mathcal{G}_{\text{halving}}} \big) 
    + \big( \underbrace{\gamma \cdot 2^\delta + c_3}_{\mathcal{G}_{\text{exp}}} \big) + C
\end{equation}

\begin{equation}
    = (\alpha+\beta)(\tau-\delta) + \gamma \cdot 2^\delta + C'
\end{equation}

Then, we get Theorem~\ref{theorem: minimizing delta} says that there is a unique delta minimizing the total gas cost. \newline

\begin{theorem} \label{theorem: minimizing delta}
The function \(\mathcal{G}_{total}(\tau, \delta)\) has exactly one unique minimum for \(\delta\) when \(\tau > \delta = \log_2 (\alpha + \beta) - \log_2 (\ln{2} \cdot \gamma) \).
\end{theorem}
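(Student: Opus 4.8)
The plan is to treat $\delta$ as a continuous real variable and minimize the reduced form
\[
\mathcal{G}_{total}(\tau,\delta) = (\alpha+\beta)(\tau-\delta) + \gamma\cdot 2^\delta + C'
\]
derived in the display preceding the theorem. First I would observe that, with $\tau$ held fixed, this expression is the sum of a term that is linear and decreasing in $\delta$ and an exponential term $\gamma\cdot 2^\delta$ that is increasing in $\delta$; the competition between these two effects is precisely what forces an interior minimizer to exist. Because the cost model is smooth in $\delta$, the natural tool is elementary single-variable calculus applied to the $\delta$-section of the function.

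The concrete computation is to differentiate with respect to $\delta$. Writing $2^\delta = e^{\delta\ln 2}$, the first partial derivative is
\[
\frac{\partial \mathcal{G}_{total}}{\partial \delta} = -(\alpha+\beta) + \gamma\,(\ln 2)\,2^\delta .
\]
Setting this equal to zero and solving gives $2^\delta = (\alpha+\beta)/(\gamma\ln 2)$, hence $\delta = \log_2(\alpha+\beta) - \log_2(\ln 2\cdot\gamma)$, which is exactly the value asserted in the statement. This identifies the unique stationary point in closed form.

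To confirm this stationary point is the unique global minimum rather than a maximum or inflection, I would compute the second derivative,
\[
\frac{\partial^2 \mathcal{G}_{total}}{\partial \delta^2} = \gamma\,(\ln 2)^2\,2^\delta ,
\]
and note that it is strictly positive for every $\delta$, since the regression coefficient $\gamma$ is positive and $2^\delta>0$. Strict convexity in $\delta$ guarantees that any stationary point is the unique global minimizer, so there is exactly one minimum. The hypothesis $\tau>\delta$ then plays the role of a feasibility condition: it ensures the closed-form minimizer lies in the admissible range where the number of halving rounds $\tau-\delta$ stays positive and the refined verification of Algorithm~\ref{algorithm: refined verfication} is well defined.

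The step I expect to require the most care is not the calculus, which is routine, but the justification that $\alpha$, $\beta$, and $\gamma$ are strictly positive, since positivity of $\gamma$ is exactly what drives the convexity argument. These signs follow from the regression equations of Section~\ref{section: measure}, where the calldata, halving, and exponentiation costs each grow with their respective parameters, and I would state this positivity explicitly as the premise underlying strict convexity. A secondary subtlety worth a brief remark is that $\delta$ is physically an integer, so the implementable optimum is one of the two integers bracketing the continuous minimizer; the uniqueness claim is most naturally read over the continuous relaxation, while strict convexity additionally controls how far the integer optimum can deviate from the closed-form value.
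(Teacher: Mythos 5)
Your proposal is correct and takes essentially the same route as the paper's own proof in \ref{section: proof of minimum delta}: differentiate \(\mathcal{G}_{total}\) in \(\delta\), solve the first-order condition \(2^\delta = (\alpha+\beta)/(\ln 2\cdot\gamma)\) to get the closed-form stationary point, and confirm minimality. The only difference is in that last step, where the paper argues informally via the sign change of the first derivative (monotone growth of \(\gamma\cdot 2^\delta\)), whereas you compute the second derivative \(\gamma(\ln 2)^2 2^\delta > 0\) and invoke strict convexity --- a slightly tighter rendering of the same argument, and your explicit remark that positivity of \(\gamma\) (from the regression coefficients) underpins the convexity is a point the paper leaves implicit.
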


\begin{proof}
    See \ref{section: proof of minimum delta}.
\end{proof}

Also, we denote the minimizing \(\delta\) as

\begin{equation}
    \delta_m = \arg\min_{\delta} \mathcal{G}_{total}(\tau, \delta).
\end{equation}

From Theorem \ref{theorem: minimizing delta}, we can get the unique \(\delta_m\) optimizing the gas cost. Applying all the regression results in Equation \ref{eq:data-2048}, \ref{eq:data-3072}, \ref{eq:exp-2048}, \ref{eq:exp-3072}, \ref{eq:halving-2048}, and \ref{eq:halving-3072}, we get the optimizing  \(\delta_{m} \approx 9.43 \) for \(\lambda = 2048 \) and \(\delta_{m} \approx 9.15 \) for \(\lambda = 3072 \) respectively. In practice, \(\delta\) for implementation must be an integer. Therefore, we can get the practical integer \(\delta\) with the following corollary. \newline

\begin{corollary} \label{corollary:integer delta}
Let \(\delta_m\) be a real number, and let \(\mathcal{G}(\tau, \delta)\) denote a function in \(\tau\) and \(\delta\). Define \(\Delta = \delta_m - \lfloor \delta_m \rfloor\) as the fractional part of \(\delta_m\). Then, the relationship between \(\mathcal{G}(\tau, \lfloor \delta_m \rfloor)\) and \(\mathcal{G}(\tau, \lceil \delta_m \rceil)\) depends on \(\Delta\) as follows:
\begin{enumerate}
    \item If \(\Delta < -\log_2(\ln 2)\), then \(\mathcal{G}(\tau, \lfloor \delta_m \rfloor) < \mathcal{G}(\tau, \lceil \delta_m \rceil)\).
    \item If \(\Delta = -\log_2(\ln 2)\), then \(\mathcal{G}(\tau, \lfloor \delta_m \rfloor) = \mathcal{G}(\tau, \lceil \delta_m \rceil)\).
    \item If \(\Delta > -\log_2(\ln 2)\), then \(\mathcal{G}(\tau, \lfloor \delta_m \rfloor) > \mathcal{G}(\tau, \lceil \delta_m \rceil)\).
\end{enumerate}
\end{corollary}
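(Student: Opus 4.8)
The plan is to reduce the three-way comparison to the sign of a single scalar obtained by differencing the explicit cost function at the two consecutive integers \(\lfloor \delta_m \rfloor\) and \(\lceil \delta_m \rceil\). Throughout I take \(\mathcal{G}\) to be \(\mathcal{G}_{\text{total}}\) from Equation~\ref{eq:total gas} and \(\delta_m\) to be its real minimizer supplied by Theorem~\ref{theorem: minimizing delta}. Writing \(n = \lfloor \delta_m \rfloor\) and restricting to the non-degenerate case \(\delta_m \notin \mathbb{Z}\) (so that \(\lceil \delta_m \rceil = n+1\) and \(\Delta = \delta_m - n \in (0,1)\)), the quantity to analyze is the difference \(\mathcal{G}(\tau, n) - \mathcal{G}(\tau, n+1)\).

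First I would substitute the closed form \(\mathcal{G}_{\text{total}}(\tau,\delta) = (\alpha+\beta)(\tau-\delta) + \gamma\, 2^\delta + C'\) into this difference. The additive constant \(C'\) and every term carrying \(\tau\) cancel, the linear part contributes exactly \((\alpha+\beta)\), and the exponential part contributes \(\gamma(2^n - 2^{n+1}) = -\gamma\, 2^n\). Hence \(\mathcal{G}(\tau, n) - \mathcal{G}(\tau, n+1) = (\alpha+\beta) - \gamma\, 2^n\), a quantity independent of \(\tau\); this already explains why the resulting ordering is independent of \(\tau\).

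Next I would invoke the first-order condition underlying Theorem~\ref{theorem: minimizing delta}, namely \(\alpha+\beta = \gamma \ln 2 \cdot 2^{\delta_m}\), and substitute \(\delta_m = n + \Delta\). This turns the difference into \(\gamma\, 2^n(\ln 2 \cdot 2^\Delta - 1)\). Since \(\gamma > 0\) and \(2^n > 0\), the sign is governed entirely by the factor \(\ln 2 \cdot 2^\Delta - 1\). Because \(2^\Delta\) is increasing, comparing \(\ln 2 \cdot 2^\Delta\) against \(1\) yields the threshold \(\Delta = \log_2(1/\ln 2) = -\log_2(\ln 2) \approx 0.5288\): when \(\Delta\) lies below it the difference is negative (case~1), at equality it vanishes (case~2), and above it the difference is positive (case~3), which is exactly the stated trichotomy. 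For completeness I would note that the trichotomy presumes \(\delta_m \notin \mathbb{Z}\) so that the floor and ceiling differ; if \(\delta_m\) were an integer the two endpoints would coincide and the comparison would be trivial.

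The computation is short, and the step I expect to be the crux is the clean substitution of the critical-point relation \(\alpha+\beta = \gamma \ln 2 \cdot 2^{\delta_m}\), since this is precisely what converts the unknown regression coefficients into a statement purely about the fractional part \(\Delta\) while simultaneously eliminating all dependence on \(\tau\). The only other place demanding care is tracking the direction of each inequality when passing from the factor \(\ln 2 \cdot 2^\Delta - 1\) to the threshold on \(\Delta\), but this is a routine monotonicity argument once the factored form is in hand.
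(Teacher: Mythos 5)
Your proof is correct and takes essentially the same route as the paper's: both reduce the trichotomy to the sign of the difference of \(\mathcal{G}\) at the consecutive integers \(\lfloor \delta_m \rfloor\) and \(\lceil \delta_m \rceil\) (your \((\alpha+\beta)-\gamma\,2^{n}\) is exactly the negative of the paper's \(f(\delta)=\gamma\,2^{\delta}-(\alpha+\beta)\)) and translate the threshold on \(\Delta\) through the critical-point identity \(\alpha+\beta=\ln 2\cdot\gamma\cdot 2^{\delta_m}\), which the paper phrases as \(\delta'=\delta_m+\log_2(\ln 2)\) together with monotonicity of \(f\). Your write-up is marginally more careful in explicitly flagging the degenerate case \(\delta_m\in\mathbb{Z}\), where floor and ceiling coincide, a point the paper's proof leaves implicit.
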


\begin{proof}
    See \ref{section: proof of integer delta}.
\end{proof}

Since \(\log_2(\ln 2) \approx -0.52\), the optimal integer value of \(\delta\) is  9 for both \(\lambda = 2048\) and \(3072\), as the 1st case of Corollary \ref{corollary:integer delta}.

\subsection{Optimized Verification Gas Cost}

Figures \ref{fig:delta-2048} and \ref{fig:delta-3072} depict the total gas cost for Pietrzak VDF verification at \(\lambda=2048\) and \(\lambda=3072\) respectively. Consistent with the theoretical predictions in Theorem \ref{theorem: minimizing delta}, the gas usage graph shows a decline to a minimal point before increasing sharply. Optimal gas costs were observed at \(\delta = 9\) for both configurations, aligning perfectly with the findings of Corollary \ref{corollary:integer delta}. This alignment between theoretical expectations and empirical results confirms the accuracy of our model and highlights its potential for optimizing Pietrzak VDF verification in blockchain applications.

Table \ref{table:final-results} details the gas consumption and \texttt{calldata} sizes for the implemented parameters, uniformly using \(\delta = 9\). Contrary to the previous expectation of \texttt{calldata} sizes around 40KB \cite{attias2020implementation}, the sizes recorded were significantly lower, ranging from 5.38 to 7.25 KB for \(\lambda = 2048\) and from 7.13 to 9.63 KB for \(\lambda = 3072\). Gas usage varied from 1.97 to 2.80 million for \(\lambda = 2048\) and from 3.68 to 5.25 million for \(\lambda = 3072\)\footnote{Assuming 1 ETH is valued at \$2600, the cost for \(\lambda = 2048\) ranges from approximately \$33.9 to \$136.6 USD when \(\tau = 25\) and \texttt{gasPrice} varies between 5 and 20 gwei.}.  These findings not only demonstrate the efficacy of our approach but also highlight potential areas for further optimization in VDF implementations. Such improvements could lead to significant cost reductions in blockchain networks, enhancing the overall efficiency of cryptographic operations.

\begin{figure*}[!htp]
    \centering
    \caption{Results for halving proof skipping parameter \(\delta\) with \(\tau = 20, 21, 22, 23, 24, 25\) for $y=x^{2^{2^{\tau}}}$}
    \label{figure:delta experiment}
    \begin{subfigure}[t]{0.47\linewidth}
        \includegraphics[width=\linewidth]{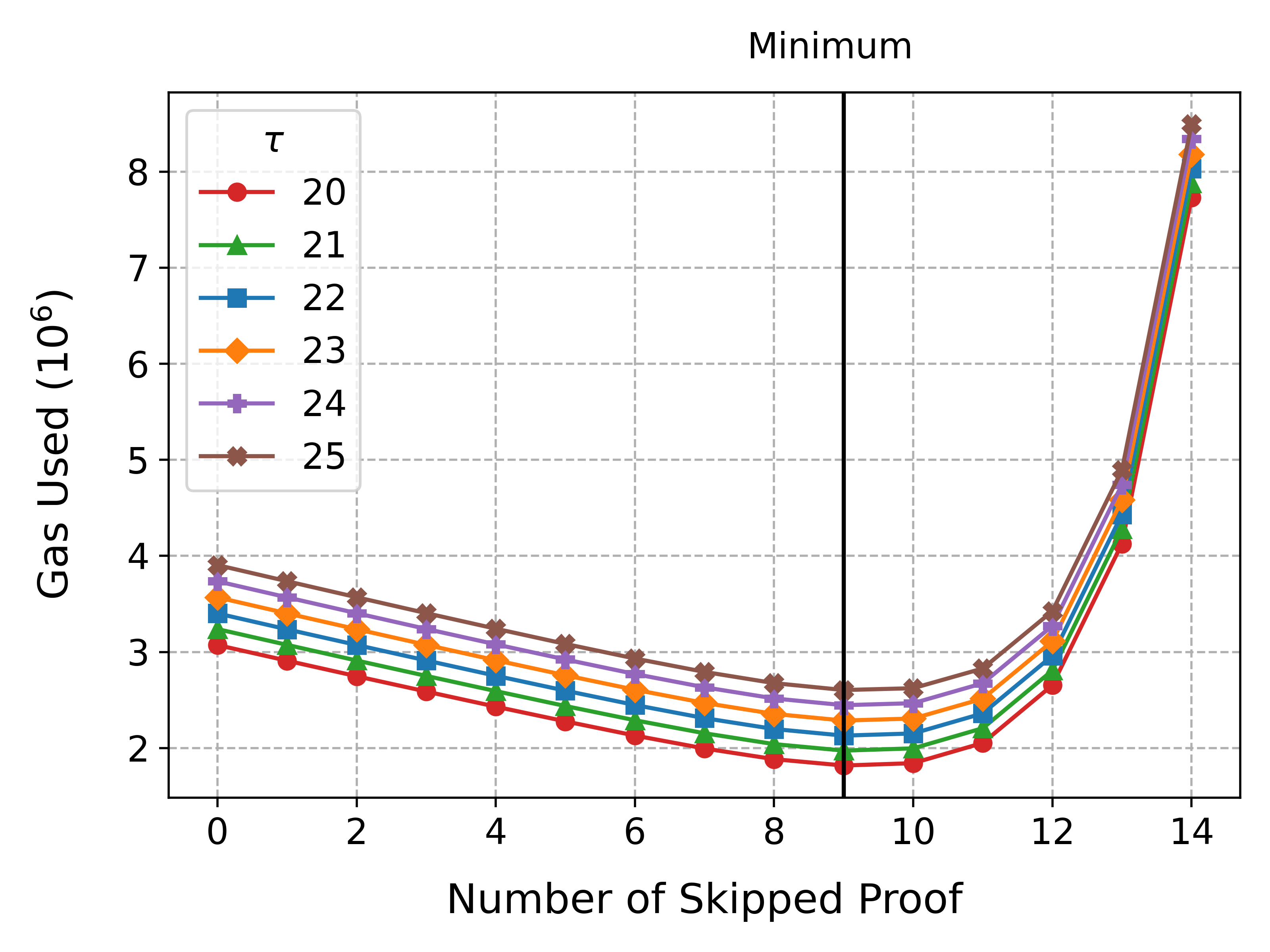}
        \caption{Skipping parts of halving proofs ($\lambda=2048$)}
        \label{fig:delta-2048}
    \end{subfigure}
    \begin{subfigure}[t]{0.47\linewidth}
        \includegraphics[width=\linewidth]{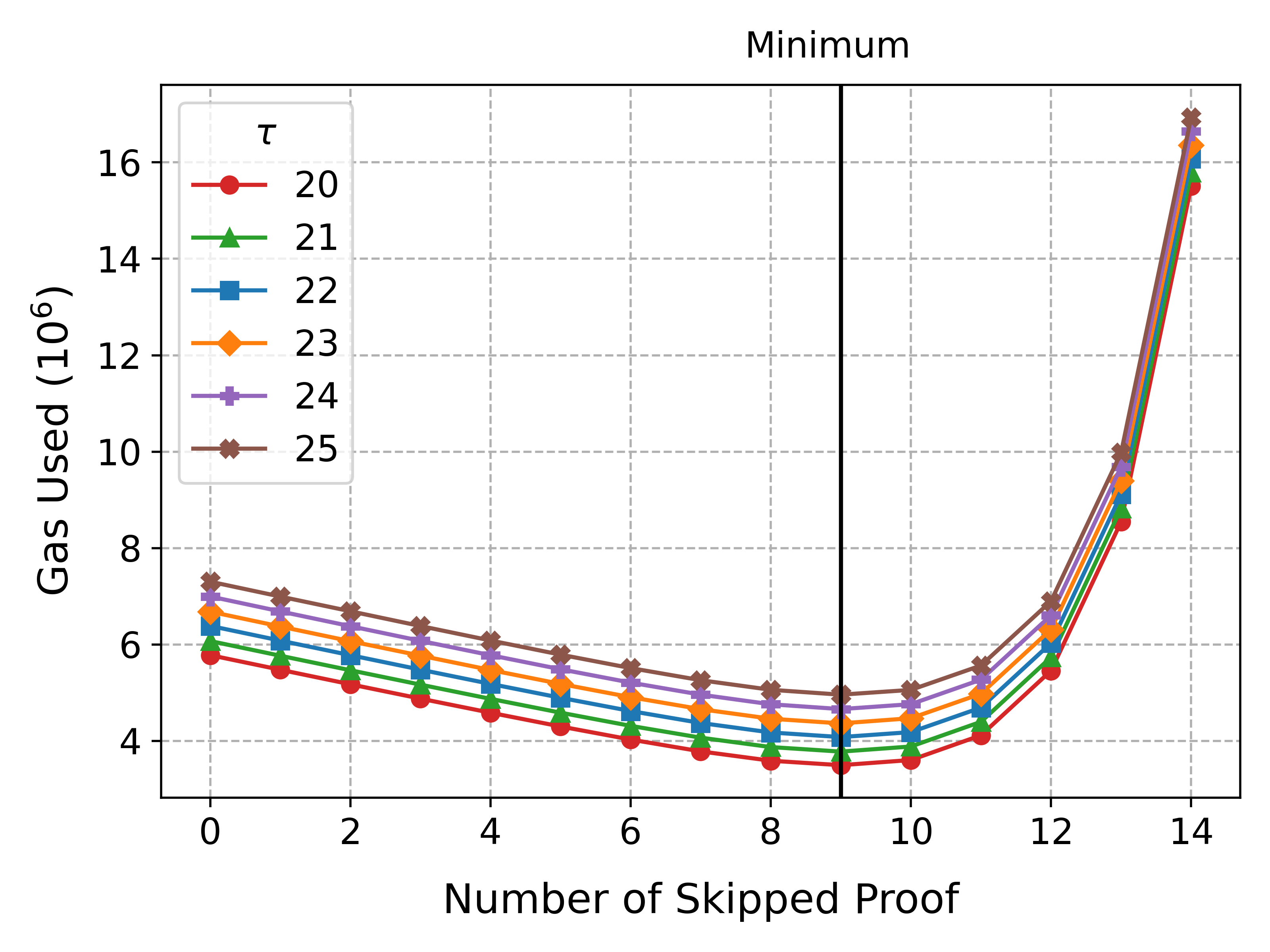}
        \caption{Skipping parts of halving proofs ($\lambda=3072$)}
        \label{fig:delta-3072}
    \end{subfigure}
\end{figure*}

\begin{table}[htb]
\centering
\caption{Implementation results based on the analysis}
\label{table:final-results}
\begin{tabularx}{0.9\textwidth}{x{1.5cm}x{1.5cm}x{1.5cm}*{2}{C}}
\toprule
$\lambda$                & $\delta$              & $\tau$ & Gas used & Calldata size (KB) \\ \midrule
\multirow{5}{*}{2048} & \multirow{5}{*}{9} & 20  &  1817855 &   5.38    \\ \cmidrule{3-5} 
                      &                    & 21  &  1973222 &   5.75        \\ \cmidrule{3-5} 
                      &                    & 22  &  2129116 &   6.13        \\ \cmidrule{3-5} 
                      &                    & 23  &  2286058 &   6.50        \\ \cmidrule{3-5} 
                      &                    & 24  &  2445803 &   6.88       \\ \cmidrule{3-5} 
                      &                    & 25  &  2603753 &   7.25       \\ \midrule
\multirow{5}{*}{3072} & \multirow{5}{*}{9} & 20  & 3495396  &   7.13       \\ \cmidrule{3-5} 
                      &                    & 21  & 3776893  &   7.63        \\ \cmidrule{3-5} 
                      &                    & 22  & 4078468  &   8.13        \\ \cmidrule{3-5} 
                      &                    & 23  &  4362630  &   8.63        \\ \cmidrule{3-5} 
                      &                    & 24  &  4658937  &   9.13        \\ \cmidrule{3-5} 
                      &                    & 25  &  4958138 &   9.63        \\ \bottomrule
\end{tabularx}
\end{table}

\section{Discussion}
\label{section: discussion}

This section examines the broader implications and feasibility of implementing Pietrzak VDF verification on Ethereum. First, we consider the feasibility of the observed gas costs and discuss how frequent verification might remain economically viable on both Ethereum’s Layer~1 and Layer~2. Next, we address the lack of an established economic model for VDFs and the challenges it poses for incentivizing participants in decentralized environments. Lastly, we look beyond the EVM and outline how the findings could extend to more general computing contexts, suggesting avenues for future research and optimization.

\subsection{Practical Feasibility of Our Observed Costs.}
From Section~\ref{section: result}, we note that even with an optimal parameter of $\delta = 9$, the gas cost for verifying a Pietrzak VDF can reach a few million gas units. On Ethereum Layer 1, this is still somewhat expensive but may be acceptable for security-critical operations that are not executed frequently (e.g., infrequent random-beacon generation or governance steps). However, in Layer 2 environments with significantly reduced gas prices, these same verification procedures can become much more practical and cost-effective, especially for repeated or more frequent VDF verifications. By striking this balance—an optimal $\delta$ and a cheaper execution environment—our approach could feasibly support advanced use cases on EVM-compatible rollups, offering robust cryptographic guarantees at a fraction of Layer 1 costs.

\subsection{Lack of Economic VDF Protocols in Ethereum}

Integrating VDFs into smart contracts offers significant potential to enhance blockchain security and reliability, yet faces major challenges due to the lack of a robust economic model tailored to their unique operational needs, such as incentivizing both provers and verifiers and managing commit-reveal-recover protocols. This absence hampers practical deployment, as it fails to compensate those who contribute significant computational resources, deterring potential contributors and affecting scalability and effectiveness in real-world applications. Additionally, the decentralized nature of blockchain necessitates economic strategies that support technical operations and ensure sustainability without central governance, promoting honest participation and deterring dishonesty to maintain integrity and fairness within the blockchain ecosystem. Thus, innovative economic models specifically designed for decentralized environments are crucial to make VDFs both technically effective and economically viable.

\subsection{Future Works on a General Environment}

Through this research, we investigated the efficient implementation of Pietrzak VDF verification within the Ethereum Virtual Machine (EVM) environment and observed results that align both theoretically and experimentally. During our study, we identified significant effects of generalized proof generation and verification, which were also included within the predictable realms of theoretical analysis. We anticipate that such an approach could be applicable not only within the EVM but also in more general computing environments. Although resource pricing in general systems would differ from the gas costs in the EVM, with elements like the repetition of halving protocols and resource consumption proportional to the proof length being predictable, the costs of exponentiation in VDF (\(2^{2^T}\)) are expected to be exponential. Therefore, based on the outcomes of this study, further research on optimizing Pietrzak VDFs in general computing environments appears reasonable.

\section{Conclusion and Future Works} \label{section: conclusion}

Our study demonstrates that Pietrzak's VDF can be effectively optimized for the EVM, reducing gas costs from 4M to 2M and minimizing proof size to under 8 KB with a 2048-bit RSA key. These optimizations validate the theoretical models and show practical viability. While our paper does not address the Wesolowski VDF implementation, which poses challenges due to its hash-to-prime function, exploring indirect or approximate implementation methods could make it feasible for the EVM. Future work should focus on utilizing Wesolowski VDF verification on EVM as it has a big benefit on the constant and short proof length and verification cost. We plan to implement on-chain Wesolowski VDF verification both directly and indirectly.
For indirect implementation, we suggest executing the hash-to-prime function in an off-chain, challenge-based verification system (inspired by Truebit\footnote{Truebit \cite{teutsch2024scalable} is a protocol that outsources computation off-chain while allowing anyone to challenge an allegedly correct result within a specific dispute window. If the challenge proves the result incorrect, a fraud proof is submitted on-chain, ensuring only correct computations are ultimately accepted.}). This approach enables a “proof-of-fraud” mechanism, where a claimed correct computation can be disputed during a challenge period, thereby minimizing on-chain costs without compromising correctness.

\bibliographystyle{ieeetr}
\bibliography{reference}  

\appendix

\section{Code availability} \label{appendix: code availability}

To facilitate reproducibility and peer review, we have made the source code for our implementation available anonymously. The repository contains all the scripts and supplementary materials necessary to reproduce the results discussed in this paper. You can access the repository at the following link: \url{https://github.com/usgeeus/Pietrzak-VDF-solidity-verifier}

\section{Expected Gas Cost for Hash-to-Prime in Smart Contracts}
\label{appendix: hash-to-prime}

In Wesolowski’s VDF, implementing the hash-to-prime verification on-chain requires an algorithm capable of determining whether numerous candidate values are prime. As efficient probabilistic methods for prime checking, we considered Miller--Rabin primality test with 11 iterations, and the Baillie--PSW primality test which uses a single iteration of Miller--Rabin to base 2 and a Lucas primality test~\cite{bailliepsw}. For evaluation, we tested 6,100 randomly selected primes over \(2^{255}\). On average, verifying a single prime consumed 21,446 gas for Miller--Rabin test and 45,912 gas for Baillie--PSW test. All prime number test cases, test scripts, and results are available in the repository referenced in~\ref{appendix: code availability}.

In the worst case, the number of prime candidates can approach \(\bigl(\log x\bigr)^2\) for large \(x\), following Shanks’s upper bound on gaps between consecutive primes~\cite{Shanks1964OnMG}. Since we are dealing with the maximum number \(2^{256} - 1\), we set \(x = 2^{256}\), giving \(\bigl(\log(2^{256})\bigr)^2 \approx 5939\). Verifying 5939 candidate primes, each requiring about 21,446 gas for Miller--Rabin test or 45,912 gas for Baillie--PSW test, would quickly exceed the gas limit and render a purely on-chain implementation impractical. Moreover, even a single Miller--Rabin check can cost about \$0.28 (at 5\,gwei) to \$1.12 (at 20\,gwei) with ETH at \$2600, so multiplying by 5939 quickly runs into thousands of dollars, further underscoring the infeasibility of on-chain hash-to-prime.

\section{Proof of Theorem \ref{theorem: minimizing delta}}
\label{section: proof of minimum delta}

\begin{proof}
Firstly, we calculate the partial derivative of \(\mathcal{G}_{total}\) with respect to \(\delta\), which yields \[\frac{\partial \mathcal{G}_{total}}{\partial \delta} = \ln{2} \cdot \gamma \cdot 2 ^\delta - (\alpha+\beta).\] To show that there is a unique \(\delta\) that minimizes \(\mathcal{G}_{total}\), we need to find where this derivative equals zero. This leads us to the equation \[2^\delta = \frac{\alpha + \beta}{\ln{2} \cdot \gamma}.\]

The function \(2^\delta\) is strictly increasing. Hence, the equation \(2^\delta = \frac{\alpha + \beta}{\ln{2} \cdot \gamma} \) has a unique solution \(\delta\), which can be expressed as \[\delta = \log_2 (\alpha + \beta) - \log_2 (\ln{2} \cdot \gamma ).\] This \(\delta\) is where the first derivative equals zero.

Further, to confirm that this \(\delta\) is indeed a minimum, we need to check the second derivative. The monotonic nature of the exponential function in this context implies that as \(\delta\) increases, \(\gamma \cdot 2^\delta\) grows without bounds, while decreasing \(\delta\) below this point leads to \(\gamma \cdot 2^\delta\) becoming smaller than \(\alpha + \beta\), thereby confirming that the function is minimized at this \(\delta\).

Thus, \(\mathcal{G}_{total}(\tau, \delta)\) indeed has one unique minimum at \(\delta\) within the specified range of \(\tau > \delta \geq 0\), completing the proof.
\end{proof}

\section{Proof of Corollary \ref{corollary:integer delta}}
\label{section: proof of integer delta}

\begin{proof}
    Define
    \begin{equation}
        f(\delta) = \mathcal{G}(\tau, \delta+1) - \mathcal{G}(\tau, \delta) = \gamma \cdot (2^{\delta+1} - 2^\delta) + (\alpha + \beta) (-(\delta+1) + \delta) = \gamma \cdot 2^\delta - (\alpha+\beta) .
    \end{equation}

    Then, \(f(\delta) = 0 \) when 
    \begin{equation}
        \delta = \log_2 (\alpha+\beta) - \log_2 \gamma .
    \end{equation}
    Also, as given \(\gamma >0 , \delta > 0\)
    \begin{equation}
        \frac{df}{d\delta} = \ln 2 \cdot \gamma \cdot 2^\delta ,
    \end{equation}
    \(f\) is a monotone-increasing function in the boundary. 
    Therefore, after the point \(\delta'\) where
    \begin{equation}
        \delta'= \log_2 (\alpha+\beta) - \log_2 \gamma = \delta_m + \log_2(\ln 2),
    \end{equation}
    \(f\) is bigger than 0.

    For the 1st case of the lemma, \(\Delta < -\log_2(\ln 2)\) implies
    \begin{equation}
        f( \lfloor \delta_m \rfloor) > 0.
    \end{equation}
    As the definition of \(f\),
    \begin{equation}
        \mathcal{G}(\tau, \lfloor \delta_m \rfloor) < \mathcal{G}(\tau, \lceil \delta_m \rceil).
    \end{equation}
    In the same way, we can get the other two cases.
\end{proof}

\end{document}